\documentclass[conference]{IEEEtran}
\usepackage{amsmath,amssymb,amsfonts}
\usepackage{graphicx}
\usepackage{booktabs}
\usepackage{cite}
\usepackage{siunitx}
\usepackage{array}
\usepackage{amsthm}
\usepackage{bm}
\usepackage{algorithm}
\usepackage{algorithmic}

\newtheorem{theorem}{Theorem}

\usepackage{fancyhdr}
\fancypagestyle{firststyle}{\fancyhf{}
		\fancyhead[L]{\small M. Ying, G. Qian, X. Wang, X. Liu, I. S. Gupte, P. Ma, D. Shakya, and T. S. Rappaport, ``SPARC: Sparse Path-Aware Residual Calibrator for Wireless Ray Tracing at Upper Mid-Band," in \textit{2026 IEEE Global Communications Conference (GLOBECOM)}, Macao, China, Dec. 2026, pp. 1--6.}
	}

\begin{document}
\setlength{\columnsep}{0.22in}
\addtolength{\topmargin}{5pt}
\addtolength{\textheight}{-1pt}
\setlength{\textfloatsep}{6pt plus 2pt minus 2pt}
\setlength{\intextsep}{6pt plus 2pt minus 2pt}

\title{SPARC: Sparse Path-Aware Residual Calibrator for Wireless Ray Tracing at Upper Mid-Band}

\author{\IEEEauthorblockN{Mingjun Ying\IEEEauthorrefmark{1},  Guanyue Qian, Xinquan Wang, Xingchen Liu, Ishaan S. Gupte, Peijie Ma,\\ Dipankar Shakya, and Theodore S. Rappaport\IEEEauthorrefmark{2}}
\IEEEauthorblockA{NYU WIRELESS, New York University, Brooklyn, NY, USA}
\IEEEauthorblockA{\IEEEauthorrefmark{1}yingmingjun@nyu.edu, \IEEEauthorrefmark{2}tsr@nyu.edu}}

\maketitle
\thispagestyle{firststyle}
\bstctlcite{BSTcontrol}
\begin{abstract}
Accurate site-specific ray tracing (RT) is essential for upper mid-band network planning, yet raw RT can produce per-path multipath component (MPC) power errors on the order of 19--24~dB in cluttered indoor environments. A fixed-geometry material-sensitivity bound shows that a 30\% relative-permittivity perturbation changes each surface interaction by at most 6.28~dB across the considered indoor materials. However, even MPCs with only one surface interaction exhibit a 19.2~dB mean RT--measurement bias, suggesting that missing clutter, displaced surfaces, and simplified 3D geometry dominate the per-path RT error. We propose SPARC (Sparse Path-Aware Residual Calibrator), a lightweight per-path calibration method that learns a sparse linear residual model from one completed RT simulation. SPARC uses standard RT features selected per fold by nested cross-validation, with ridge regularization and power-gated path matching; four features recur in both environments. Using measured indoor factory (InF) and indoor hotspot (InH) datasets at 6.75 and 16.95~GHz, SPARC reduces per-path power RMSE from 18.74 to 4.74~dB in InF and from 23.12 to 5.39~dB in InH. A jointly trained InF+InH model achieves 5.73~dB RMSE. When all links from one transmitter location are held out for testing, SPARC achieves 4.99~dB RMSE in InF and 5.85~dB RMSE in InH. SPARC therefore provides a practical post-processing calibration layer for site-specific per-path power prediction without ray-tracer modification or additional RT runs.
\end{abstract}

\begin{IEEEkeywords}
Ray tracing calibration, indoor factory, indoor hotspot, upper mid-band, multipath, site-specific modeling.
\end{IEEEkeywords}

\section{Introduction}
\label{sec:intro}

Site-specific RT has long been used to predict path loss, delay spread, and dominant propagation paths in indoor and outdoor environments~\cite{mcknown1991ray,seidel1994site,panjwani1996coverage,yun2015ray}.
Modern RT tools such as Wireless InSite~\cite{remcom_winsite}, Sionna~RT~\cite{hoydis2023sionna}, and NYURay~\cite{kanhere2024nyuray} have extended site-specific channel modeling to upper mid-band and mmWave frequencies for 5G and 6G systems~\cite{rappaport2013mmwave,ying2026npj,3gpp_inf}.
Related NYU work spans indoor 72~GHz measurements, 28~GHz multi-beam propagation, 142~GHz urban microcells, and multi-frequency statistical modeling~\cite{nie2013indoor,sun2014multibeam,xing2021subthz,samimi2015statistical}; earlier site-specific systems addressed building-database construction and reception-surface ray tracing~\cite{rappaport2004BDM,rappaport2004RTsurface}.
For practical wireless planning, however, matching measured channel behavior matters more than geometric detail alone.
Raw RT output often shows large power error because electromagnetic material parameters, clutter placement, and geometry registration jointly affect both the attenuation of traced paths and the candidate paths retained by the RT solver.
RT calibration is therefore necessary before site-specific prediction can be used reliably in cluttered indoor environments.


Most prior RT calibration methods tune material parameters at link level: simulated annealing~\cite{jemai2009calibration}, coordinate descent~\cite{kanhere2024nyuray}, and differentiable RT~\cite{hoydis2024learning,ruah2024calibrating} require repeated RT evaluations. Material tuning modifies Fresnel loss only along existing paths, whereas clutter omission and geometry simplification change path topology. Per-path attenuation based on scatterer size~\cite{haniz2025multipath} requires object-level geometry rarely available in standard RT outputs. Automated 3D reconstruction with material assignment~\cite{ying2026horama} and location calibration~\cite{ying2025pdp_location,ying2026npj} address scene-model errors. Wireless digital twins support indoor navigation and localization under virtual--physical mismatch~\cite{li2025digitaltwin,lei2026locus}, but neither study calibrates individual RT-path power. Measurement-driven statistical channel modeling likewise emphasizes physically consistent datasets and standardized representations~\cite{jariwala2026nyusim}; such models complement site-specific RT but do not retain individual traced-path identities. Ensemble ML methods, including gradient boosting~\cite{chen2016xgboost}, are common baselines~\cite{vasudevan2024mlsurvey,benameur2025xgboost} but may degrade on unseen geometries~\cite{celades2026shadow}. Table~\ref{tab:related_work} categorizes existing approaches by calibration domain, resolution, and operational cost.

The central question is whether path-level features from a completed RT simulation can correct per-path power error that material calibration probably cannot. A fixed-geometry material-correctable bound (Theorem~\ref{thm:material_bound}) shows that the observed 19--23~dB per-bounce bias cannot be explained by material-parameter error alone; the remaining gap originates from 3D scene inaccuracies that cause RT to trace incorrect paths. Sparse Path-Aware Residual Calibrator (SPARC) exploits path-level features for a sparse linear correction from one completed simulation. Validation against InF~\cite{ying2025icc_infactory} and InH~\cite{shakya2024ojcoms} measurements at 6.75 and 16.95~GHz shows that four common RT features improve per-path prediction in both environments. 
The main contributions of this paper are:

\begin{table}[!t]
\centering
\caption{RT calibration approaches by optimization domain, resolution level, and need for additional RT evaluations.}
\label{tab:related_work}
\small
\setlength{\tabcolsep}{3pt}
\begin{tabular}{@{}lccc@{}}
\toprule
Method & Domain & Level & Extra RT \\
\midrule
Simulated annealing~\cite{jemai2009calibration} & Material & Link & Yes \\
Coord.\ descent~\cite{kanhere2024nyuray} & Material & Link & Yes \\
Diff.\ RT~\cite{hoydis2024learning} & Material & Link & Yes \\
Scatterer atten.~\cite{haniz2025multipath} & Material & Path & No \\
NeRF / 3DGS~\cite{mildenhall2021nerf,kerbl2023gaussian} & Neural & Link & N/A \\
\textbf{SPARC (proposed)} & Metadata & \textbf{Path} & \textbf{No} \\
\bottomrule
\end{tabular}
\par\vspace{3pt}{\footnotesize\raggedright\emph{Level}: Link = whole-link power; Path = per matched MPC. \emph{Extra RT}: additional RT runs beyond initial simulation.\par}
\vspace{-3pt}
\end{table}

\begin{itemize}
\item \textbf{Per-path power calibration from a single RT simulation.} SPARC learns a sparse linear correction from a selected feature subset and requires no additional RT evaluations at deployment. Validated against 1~GHz-bandwidth field measurements~\cite{ying2025icc_infactory,shakya2024ojcoms} at 6.75 and 16.95~GHz, SPARC reduces per-path power RMSE from 18.74 to 4.74~dB in InF and from 23.12 to 5.39~dB in InH.
\item \textbf{Material-sensitivity bound proving geometry dominates.} Theorem~\ref{thm:material_bound} shows that $\pm$30\% relative-permittivity perturbations shift each surface interaction by at most 1.91~dB for concrete and 1.68~dB for glass, rising to 3.74~dB for plywood and 6.28~dB for wood. Measured per-bounce bias exceeds 19~dB, proving that material calibration cannot close the gap.
\item \textbf{Generalization across indoor environments.} Training a single SPARC model on pooled InF and InH measurement data achieves 5.73~dB per-path power RMSE. Holding out entire TX locations for testing yields 4.99~dB (InF) and 5.85~dB (InH) RMSE, confirming that the learned correction transfers to unseen spatial locations.
\end{itemize}

\section{System Model and Problem Formulation}
\label{sec:system}

\subsection{Wideband Channel and Measurement Model}

Consider a wideband single-input single-output link between a transmitter (TX) and a receiver (RX).
The omnidirectional channel impulse response is
\begin{equation}
h(\tau) = \sum_{\ell=1}^{L} a_{\ell}\,\delta(\tau-\tau_{\ell}),
\label{eq:cir}
\end{equation}
where $a_{\ell}$ and $\tau_{\ell}$ denote the complex amplitude and propagation delay of path $\ell$.
Both campaigns use a 1~GHz bandwidth sliding-correlator sounder at 6.75 and 16.95~GHz; full sounder specifications are given in~\cite{ying2025icc_infactory,shakya2024ojcoms}.
The stored PDP is a band-limited observation of~\eqref{eq:cir} on a discrete delay grid.
NYURay~\cite{kanhere2024nyuray,ying2026npj} returns idealized path tuples $(a_{\ell}^{\mathrm{RT}},\tau_{\ell}^{\mathrm{RT}})$ together with interaction count, material labels, and angular metadata.
RT paths are convolved with a $\mathrm{sinc}(Bt)$ kernel ($B=1$~GHz) and summed on the measurement delay grid to match the sounder bandwidth.
The calibration target is the integrated energy of each dominant matched MPC in dBm on the resulting band-limited PDP.
{RT and the physical channel produce different path sets because the 3D model may omit clutter or simplify surfaces, so per-path calibration requires pairing each RT path with the corresponding measured peak before power error can be computed.}

\subsection{Indoor Measurement and Matched-Path Problem}

The InF campaign~\cite{ying2025icc_infactory} covers three TX and twelve RX locations in a 36$\times$22~m$^{2}$ open-plan factory with workstations, metallic equipment, glass partitions, and concrete pillars (T--R separation 9--37~m; floor plan and TX/RX layout in~\cite{ying2025icc_infactory}).
The InH campaign~\cite{shakya2024ojcoms} covers four TX and twenty RX locations in a 109$\times$44~m$^{2}$ corridor-connected multi-room office building with concrete walls, metal furnishings, and wooden doors (T--R separation 11--97~m; floor plan and TX/RX layout in~\cite{shakya2024ojcoms}). Measurements capture MPC absolute time with sub-nanosecond accuracy using PTP-based synchronization, supporting accurate RT path identification~\cite{shakya2023sub}.

RT paths are paired one-to-one with measured dominant peaks by the Hungarian algorithm~\cite{kuhn1955hungarian} under a 10~ns delay tolerance.
A power gate $\Delta_{\max}$ rejects training candidate matches whose absolute power difference exceeds $\Delta_{\max}$ ($\Delta_{\max}=30$~dB for InF, 38~dB for InH), removing delay-coincident pairs that arrive via different physical routes due to 3D model inaccuracies such as missing glass partitions; each retained pair is a matched MPC.
{The wider InH gate reflects the larger T--R separation range (11--97~m versus 9--37~m in InF), which increases the spread of received MPC powers and requires a wider acceptance window.}
{The gate is a training-data cleaning step; at deployment SPARC is applied to every RT path, raising per-path RMSE by 0.4/1.0~dB (InF/InH) as an upper bound.}
For matched path $\ell$, let $P_{\ell}^{\mathrm{RT}}$ denote the RT power in dBm and let $P_{\ell}^{\mathrm{meas}}$ denote the measured power.
The matched-path error is
\begin{equation}
e_{\ell} \triangleq P_{\ell}^{\mathrm{RT}} - P_{\ell}^{\mathrm{meas}},
\label{eq:error}
\end{equation}
decomposed as $e_{\ell} = e_{\ell}^{\mathrm{mat}} + e_{\ell}^{\mathrm{geo}} + \nu_{\ell}$.
The decomposition is additive in dB because each interaction contributes a multiplicative Fresnel factor in linear power (cf.~\eqref{eq:path_model}).
Here $e_{\ell}^{\mathrm{mat}}$ denotes the contribution of imperfect material parameters on an otherwise fixed traced path, $e_{\ell}^{\mathrm{geo}}$ denotes the contribution of path-topology mismatch such as clutter omission or scene abstraction, and $\nu_{\ell}$ collects residual mismatch from peak extraction and unmodeled effects.
{Eq.~\eqref{eq:error} is a conceptual decomposition: $e_{\ell}^{\mathrm{mat}}$ and $e_{\ell}^{\mathrm{geo}}$ are not separately identifiable, and the exclusion argument in Sec.~\ref{sec:method_bound} only requires $|e_{\ell}^{\mathrm{mat}}|\leq n_{\ell}\eta_{\max}$ on the RT-traced path, which holds whether or not the path is physically realized (e.g., when a missing 3D-model object causes a phantom direct trace).}
Let $\mathbf{x}_{\ell}\in\mathbb{R}^{19}$ denote the RT feature vector extracted from matched path $\ell$ and $q_{\ell}$ the corresponding TX--RX--frequency group.
Let $N$ denote the total number of power-gated matched MPCs pooled across all TX--RX links and both carrier frequencies.

Material-domain calibration perturbs the nominal material vector $\bm{\vartheta}$ and solves
\begin{equation}
\min_{\Delta\bm{\vartheta}\in\mathcal{V}}\; \frac{1}{N}\sum_{\ell=1}^{N}
\left(P_{\ell}^{\mathrm{RT}}(\bm{\vartheta}+\Delta\bm{\vartheta})-P_{\ell}^{\mathrm{meas}}\right)^2.
\label{eq:material_problem}
\end{equation}
Residual-space calibration instead seeks a correction function $g(\mathbf{x}_{\ell})$ such that
$\widehat{P}_{\ell}^{\mathrm{cal}} = P_{\ell}^{\mathrm{RT}} - g(\mathbf{x}_{\ell})$ and
\begin{equation}
\min_{g \in \mathcal{G}}\; \frac{1}{N}\sum_{\ell=1}^{N}\left(e_{\ell}-g(\mathbf{x}_{\ell})\right)^2.
\label{eq:problem}
\end{equation}
Problem~\eqref{eq:material_problem} can only reduce $e_{\ell}^{\mathrm{mat}}$ along paths whose topology is unchanged by the perturbation, whereas~\eqref{eq:problem} can absorb any predictable error component, including scene-geometry errors, encoded by $\mathbf{x}_{\ell}$.

\section{Proposed Calibration Method}
\label{sec:method}

SPARC takes matched MPC pairs from one completed RT simulation as input and outputs calibrated per-MPC power via sparse ridge regression; Algorithm~\ref{alg:sparse_calibration} details the procedure.
Sec.~\ref{sec:method_bound} first establishes a material-sensitivity bound proving that a residual-space approach is necessary.

\subsection{Material-Correctable Bound Under Fixed Geometry}
\label{sec:method_bound}

For one matched path with fixed interaction sequence, fixed incidence angles, and fixed path length, the ray-traced power in dB can be written as
\begin{equation}
P_{\ell}^{\mathrm{RT}}(\bm{\vartheta}) = C_{\ell}^{\mathrm{geom}} + \sum_{k=1}^{n_{\ell}} \phi_{m_k}(\bm{\vartheta}_{m_k},\gamma_k),
\label{eq:path_model}
\end{equation}
where $n_{\ell}$ is the number of interactions on path $\ell$, $C_{\ell}^{\mathrm{geom}}$ collects all geometry-fixed terms such as spreading loss and antenna factors, $\bm{\vartheta}_{m}=[\varepsilon_{r,m},\sigma_m]^\top$ contains the relative permittivity and conductivity of material $m$, $m_k$ is the material at interaction $k$, and $\gamma_k$ collects the local geometric arguments such as incidence angle, polarization state, and interaction type.
One explicit planar-interface realization is
\begin{equation}
\begin{aligned}
\phi_m(\bm{\vartheta}_m,\gamma) &= 20\log_{10}\left|F_{p,t}(\widetilde{\varepsilon}_{r,m},\theta_i)\right|,\\
\widetilde{\varepsilon}_{r,m} &= \varepsilon_{r,m}-j\frac{\sigma_m}{\omega\varepsilon_0},
\end{aligned}
\label{eq:fresnel_phi}
\end{equation}
where $F_{p,t}$ denotes the Fresnel reflection or transmission coefficient selected by polarization $p$, interaction type $t$, and incidence angle $\theta_i$ measured from the surface normal.

Perturbing $\varepsilon_r$ within $\pm 30\%$ of nominal ITU values shifts each interaction by under 2~dB for structural materials, 3.74~dB for plywood, and 6.28~dB for wood, so an $n$-bounce path changes by at most $n$ times the per-interaction ceiling.
\begin{theorem}[Material-correctable bound]
\label{thm:material_bound}
Assume each interaction function $\phi_{m_k}(\cdot,\gamma_k)$ is twice continuously differentiable on a compact convex admissible set $\mathcal{V}_{m_k}\subset\mathbb{R}^{2}$, and the perturbation satisfies $\lVert \Delta \bm{\vartheta}_m \rVert_2 \leq \rho_m$ for every material~$m$.
Define the first-order and second-order sensitivity envelopes:
\begin{align}
S_m(\gamma) &\triangleq \sup_{\mathbf{u}\in\mathcal{V}_m}\lVert\nabla_{\mathbf{u}}\phi_m(\mathbf{u},\gamma)\rVert_2, \label{eq:S_env}\\
H_m(\gamma) &\triangleq \sup_{\mathbf{u}\in\mathcal{V}_m}\lVert\nabla_{\mathbf{u}}^2\phi_m(\mathbf{u},\gamma)\rVert_2. \label{eq:H_env}
\end{align}
Let $\Delta P_{\ell}^{\mathrm{mat}} \triangleq P_{\ell}^{\mathrm{RT}}(\bm{\vartheta}+\Delta \bm{\vartheta}) - P_{\ell}^{\mathrm{RT}}(\bm{\vartheta})$ denote the path-power change due to a material perturbation on a fixed traced path.
Then every matched path $\ell$ obeys
\begin{equation}
\left|\Delta P_{\ell}^{\mathrm{mat}}\right|
\leq \sum_{k=1}^{n_{\ell}}
\Bigl(
S_{m_k}(\gamma_k)\,\rho_{m_k}
+ \tfrac{1}{2}\,H_{m_k}(\gamma_k)\,\rho_{m_k}^{2}
\Bigr).
\label{eq:bound}
\end{equation}
\end{theorem}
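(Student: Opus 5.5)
The plan is to work directly from the fixed-geometry decomposition \eqref{eq:path_model}. With the interaction sequence, the incidence angles, and the path length held fixed, the geometry term $C_{\ell}^{\mathrm{geom}}$ does not depend on $\bm{\vartheta}$ and therefore cancels in the difference. This gives
\begin{equation*}
\Delta P_{\ell}^{\mathrm{mat}}=\sum_{k=1}^{n_{\ell}}\Bigl(\phi_{m_k}(\bm{\vartheta}_{m_k}+\Delta\bm{\vartheta}_{m_k},\gamma_k)-\phi_{m_k}(\bm{\vartheta}_{m_k},\gamma_k)\Bigr).
\end{equation*}
By the triangle inequality, it then suffices to bound each single-interaction increment by $S_{m_k}(\gamma_k)\rho_{m_k}+\tfrac12 H_{m_k}(\gamma_k)\rho_{m_k}^2$. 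Several interactions may share the same material, and hence the same perturbation $\Delta\bm{\vartheta}_m$. This causes no trouble, because every interaction is bounded separately before summing.

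For a single interaction, write $\mathbf{u}_0=\bm{\vartheta}_m$ and $\mathbf{d}=\Delta\bm{\vartheta}_m$. I would use Taylor's theorem along the segment $\mathbf{u}(t)=\mathbf{u}_0+t\mathbf{d}$ for $t\in[0,1]$. The endpoints $\mathbf{u}_0$ and $\mathbf{u}_0+\mathbf{d}$ are both admissible, so the whole segment lies in $\mathcal{V}_m$ by convexity, and the envelopes \eqref{eq:S_env}--\eqref{eq:H_env} can be applied at every point of it. Compactness, together with the fact that $\phi_m$ is $C^2$, makes $S_m$ and $H_m$ finite.

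Next, define $f(t)=\phi_m(\mathbf{u}(t),\gamma)$. The integral form of the remainder gives
\begin{equation*}
f(1)-f(0)=\nabla\phi_m(\mathbf{u}_0)^\top\mathbf{d}+\int_0^1(1-t)\,\mathbf{d}^\top\nabla^2\phi_m(\mathbf{u}(t))\,\mathbf{d}\,dt.
\end{equation*}
The two terms are bounded separately:
\begin{itemize}
\item By Cauchy--Schwarz, the linear term is at most $S_m\lVert\mathbf{d}\rVert_2\le S_m\rho_m$.
\item Using the spectral-norm bound $|\mathbf{d}^\top A\mathbf{d}|\le\lVert A\rVert_2\lVert\mathbf{d}\rVert_2^2$ together with $\int_0^1(1-t)\,dt=\tfrac12$, the remainder is at most $\tfrac12 H_m\rho_m^2$.
\end{itemize}
Summing these single-interaction bounds over $k$ yields \eqref{eq:bound}.

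The main obstacle is the admissibility bookkeeping. The statement does not say explicitly that the nominal and perturbed parameters lie in $\mathcal{V}_m$, but the argument needs this in order to use convexity. I would state it as the implicit meaning of ``admissible perturbation'', namely that $\mathcal{V}$ in \eqref{eq:material_problem} is contained in $\prod_m\mathcal{V}_m$.

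A secondary point is that $\phi_m=20\log_{10}|F|$ is $C^2$ only where $F\neq0$. For example, at the Brewster angle with a lossless material, $|F|$ vanishes and $\phi_m$ is not differentiable. This is why the hypothesis must be read as excluding such points on $\mathcal{V}_m$ for the given $\gamma_k$, with conductivity or angle constraints keeping $|F|$ bounded away from zero. With that reading, the proof is a direct Taylor estimate and needs no earlier results beyond \eqref{eq:path_model}.
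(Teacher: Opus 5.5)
Your proposal is correct and follows the paper's proof essentially step for step: you cancel the geometry term, use the triangle inequality to reduce to single interactions, and bound a second-order Taylor expansion along the segment by Cauchy--Schwarz on the gradient term and the spectral-norm bound on the Hessian term. The differences are cosmetic: you use the integral remainder where the paper uses the mean-value (Lagrange) form, and you state explicitly that both $\bm{\vartheta}_{m}$ and $\bm{\vartheta}_{m}+\Delta\bm{\vartheta}_{m}$ must lie in $\mathcal{V}_m$ (with $F\neq 0$ there), which the paper's proof assumes tacitly.
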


\begin{proof}
For interaction~$k$, define the shorthand $f_k(\mathbf{u})\triangleq \phi_{m_k}(\mathbf{u},\gamma_k)$.
By second-order Taylor expansion with mean-value remainder, the change in interaction loss is
\begin{equation}
\Delta f_k = \nabla f_k(\bm{\vartheta}_{m_k})^{\!\top}\Delta\bm{\vartheta}_{m_k}
+ \tfrac{1}{2}\,\Delta\bm{\vartheta}_{m_k}^{\top}\nabla^{2}f_k(\widetilde{\bm{\vartheta}}_{m_k})\,\Delta\bm{\vartheta}_{m_k},
\label{eq:taylor_interaction}
\end{equation}
where $\widetilde{\bm{\vartheta}}_{m_k}$ lies on the segment between $\bm{\vartheta}_{m_k}$ and $\bm{\vartheta}_{m_k}+\Delta\bm{\vartheta}_{m_k}$.
Applying the Cauchy--Schwarz inequality to the first term and the spectral-norm bound to the quadratic form yields
\begin{equation}
|\Delta f_k| \leq S_{m_k}(\gamma_k)\,\rho_{m_k}
+ \tfrac{1}{2}\,H_{m_k}(\gamma_k)\,\rho_{m_k}^{2}.
\label{eq:per_interaction_bound}
\end{equation}
Because the path model~\eqref{eq:path_model} is a sum of per-interaction terms, the triangle inequality gives $|\Delta P_{\ell}^{\mathrm{mat}}| \leq \sum_{k=1}^{n_\ell} |\Delta f_k|$, which together with~\eqref{eq:per_interaction_bound} yields the bound~\eqref{eq:bound}.
\end{proof}

Defining $\eta_{\max}\triangleq \sup_{m,\gamma}(S_m(\gamma)\rho_m+\frac{1}{2}H_m(\gamma)\rho_m^2)$, every $n_\ell$-bounce path satisfies $|\Delta P_\ell^{\mathrm{mat}}|\leq n_\ell\,\eta_{\max}$; any measured bias exceeding $n_\ell\,\eta_{\max}$ cannot be explained predominantly by material-parameter error.
Theorem~\ref{thm:material_bound} applies only to the material-dependent correction of a fixed traced path; path creation, blockage, rerouting, and interaction-order changes remain part of $e_{\ell}^{\mathrm{geo}}$ in~\eqref{eq:error}.

Fig.~\ref{fig:material_bound}(a) evaluates the bound numerically for four ITU dielectric materials by finite-difference perturbation of the Fresnel-response loss under $\pm 30\%$ relative-permittivity variation, corresponding to $\rho_m=0.30\,\varepsilon_{r,m}^{\mathrm{nom}}$.
Over the full incidence range $1^{\circ}$--$89^{\circ}$, the per-material maximum shift is bounded by 1.91~dB (concrete), 1.68~dB (glass), 3.74~dB (plywood), and 6.28~dB (wood); the global $\eta_{\max}=6.28$~dB is attained on wood.

Fig.~\ref{fig:material_bound}(b) shows the measured mean bias magnitude on the detailed InF 3D model by bounce count: 19.2, 23.5, and 23.1~dB for single-, double-, and triple-bounce MPCs.
By Theorem~\ref{thm:material_bound}, a three-bounce path has a material-correctable ceiling of $3\eta_{\max}<18.84$~dB even at $\pm 30\%$ perturbation (using the worst-case wood material), whereas the observed three-bounce bias is 23.1~dB.
Material-domain calibration alone cannot close this gap in either the InF or InH dataset; the InH uncalibrated RMSE of 23.12~dB likewise exceeds the same ceiling.

{The bound constitutes an exclusion argument: material uncertainty is ruled out as the dominant error source, so the residual predominantly originates from 3D scene inaccuracies that alter path topology, consistent with prior findings that geometry fidelity dominates RT validation error at upper mid-band~\cite{ying2026npj}.}
The remaining error is dominated by $e_{\ell}^{\mathrm{geo}}$, the path-topology mismatch caused by 3D scene inaccuracies.
Material parameters affect only the Fresnel loss along a fixed path, whereas path-level features such as bounce count, RT predicted power, TX--RX distance, and LoS status encode whether a traced path is realistic or an artifact of scene simplification.
A correction function of path-level features can therefore capture the structured component of $e_{\ell}^{\mathrm{geo}}$ without re-running RT.
{A path with low RT predicted power and high bounce count traverses more scene surfaces, accumulating mismatch from each unmodeled or displaced object; the linear correction scales with the cumulative geometry-error exposure. The LoS flag and TX--RX distance further separate direct paths from multi-bounce paths that are more susceptible to scene simplification.}

\begin{figure}[t]
\centering
\includegraphics[width=\columnwidth]{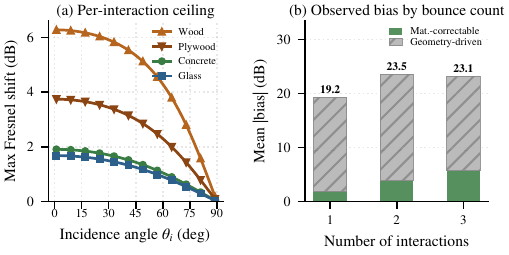}
\vspace{-25pt}
\caption{Material-sensitivity bound (Theorem~\ref{thm:material_bound}) versus observed per-path bias on the detailed InF model at 6.75 and 16.95~GHz. (a)~Maximum Fresnel shift under $\pm 30\%$ relative-permittivity perturbation; structural materials shift by under 2~dB. (b)~Mean RT--measurement error by interaction count; green and hatched bars show the material-correctable ceiling and remaining geometry-driven error.}
\label{fig:material_bound}
\end{figure}

\subsection{SPARC Calibrator}
\label{sec:method_calibrator}

The correction function in~\eqref{eq:problem} is restricted to the linear class $g(\mathbf{x})=\mathbf{w}^{\top}\mathbf{z}+b$, where $\mathbf{z}_{\ell} = \mathbf{D}_{\mathcal{S}}^{-1}(\mathbf{x}_{\ell,\mathcal{S}}-\boldsymbol{\mu}_{\mathcal{S}})$ is the fold-wise standardized feature vector restricted to a selected subset $\mathcal{S}$, in which $\boldsymbol{\mu}_{\mathcal{S}}$ and $\mathbf{D}_{\mathcal{S}}$ are the vector of feature means and the diagonal matrix of feature standard deviations of the training fold, restricted to $\mathcal{S}$, yielding the calibrated power
\begin{equation}
\widehat{P}_{\ell}^{\mathrm{cal}} = P_{\ell}^{\mathrm{RT}} - (\mathbf{w}^{\top}\mathbf{z}_{\ell} + b).
\label{eq:correction}
\end{equation}
The weights are estimated by ridge regression~\cite{hoerl1970ridge}:
\begin{equation}
\min_{\mathbf{w},b}\; \frac{1}{N}\sum_{\ell=1}^{N}\left(e_{\ell}-\mathbf{w}^{\top}\mathbf{z}_{\ell}-b\right)^2 + \alpha \lVert \mathbf{w} \rVert_2^2 .
\label{eq:ridge}
\end{equation}

Nineteen candidate features $\mathcal{F}$ are constructed from each matched path, including bounce count (and squared), three material indicators, excess delay, TX--RX distance, LoS and frequency flags, cluster size, RT predicted power, log delay, four interaction terms (bounce$\times$frequency, bounce$\times$LoS, bounce$\times$cluster, bounce$\times$zenith), and angular features derived from the departure zenith angle~$\theta_t$, arrival zenith angle~$\theta_r$, and azimuth difference~$\phi_r{-}\phi_t$ between arrival and departure directions.
Features are ranked by $|w_j|$ from a full ridge fit on each training fold; the top-$k$ subsets $\mathcal{S}_{k}=\{\pi_{1},\ldots,\pi_{k}\}$, $k=1,\ldots,K$ ($K{=}10$), are evaluated by inner leave-one-group-out CV to select
\begin{equation}
(\widehat{k},\widehat{\alpha}) =
\arg\min_{k\in\{1,\ldots,K\},\alpha\in\mathcal{A}}
\widehat{\mathrm{RMSE}}_{\mathrm{inner}}(\mathcal{S}_{k},\alpha),
\label{eq:model_selection}
\end{equation}
where $\mathcal{A}=\{0.1,1,10,50,100,500,1000\}$ is the ridge grid.
Algorithm~\ref{alg:sparse_calibration} summarizes the nested selection and deployment procedure.

\begin{algorithm}[!htb]
\caption{SPARC: Sparse Path-Aware Residual Calibration}
\label{alg:sparse_calibration}
\footnotesize
\begin{algorithmic}[1]
\STATE \textbf{Input:} Power-gated training set $\mathcal{D}{=}\{(\mathbf{x}_{\ell},e_{\ell},q_{\ell})\}_{\ell=1}^{N}$ from~Sec.~\ref{sec:system}; max feature count $K$; ridge penalty grid $\mathcal{A}$
\STATE \textbf{Output:} Calibration model $(\widehat{\mathcal{S}},\widehat{\mathbf{w}},\widehat{b},\widehat{\boldsymbol{\mu}},\widehat{\mathbf{D}})$
\STATE \textit{\% Phase 1: Nested leave-one-group-out cross-validation}
\STATE Partition $\mathcal{D}$ into $O$ folds by group label $q_{\ell}$
\FOR{$o = 1$ \TO $O$}
    \STATE $(\boldsymbol{\mu}^{(o)},\mathbf{D}^{(o)}) \leftarrow$ mean and std of training fold
    \STATE $\mathbf{z}_{\ell} \leftarrow \mathbf{D}^{(o)^{-1}}(\mathbf{x}_{\ell} - \boldsymbol{\mu}^{(o)})$ \COMMENT{standardize}
    \STATE $\mathbf{w}^{\mathrm{full}} \leftarrow$ ridge on all features \COMMENT{initial ranking}
    \STATE $\pi \leftarrow \mathrm{argsort}(|w_j^{\mathrm{full}}|,\;\text{descending})$
    \FOR{$k = 1$ \TO $K$}
        \STATE $\mathcal{S}_k \leftarrow \{\pi_1,\ldots,\pi_k\}$
        \FOR{$\alpha \in \mathcal{A}$}
            \STATE Evaluate inner cross-validation RMSE with $(\mathcal{S}_k, \alpha)$
        \ENDFOR
    \ENDFOR
    \STATE $(k^{*},\alpha^{*}) \leftarrow \arg\min$ inner cross-validation RMSE
    \STATE Refit ridge with $(\mathcal{S}_{k^*}, \alpha^{*})$; predict fold $o$
\ENDFOR
\STATE \textit{\% Phase 2: Final model}
\STATE $\widehat{\mathcal{S}} \leftarrow$ majority-vote feature subset across $O$ folds
\STATE $(\widehat{\mathbf{w}},\widehat{b},\widehat{\boldsymbol{\mu}},\widehat{\mathbf{D}}) \leftarrow$ refit ridge on all $N$ samples with $\widehat{\mathcal{S}}$
\STATE \textit{\% Deployment (no measurements, no gate)}
\STATE Run RT; for every RT path, extract $\mathbf{x}_{\ell}$ and apply~\eqref{eq:correction} $\Rightarrow \widehat{P}_{\ell}^{\mathrm{cal}}$
\end{algorithmic}
\end{algorithm}

\section{Experimental Setup}
\label{sec:setup}

RT simulations are generated with NYURay~\cite{kanhere2024nyuray,ying2026npj} using 200{,}000 launched ray samples per link, maximum depth~3, diffraction and refraction enabled, and diffuse scattering disabled.
Although the experiments use NYURay, SPARC relies only on standard RT output (predicted path power, delay, bounce count, material labels, angles) available from any commercial or open-source ray tracer such as Wireless InSite~\cite{remcom_winsite} or Sionna~RT~\cite{hoydis2023sionna}.
Dominant RT paths are matched to measured peaks across all available TX--RX--frequency groups in both environments.
RMSE in dB is the primary metric.
Leave-one-group-out CV holds out all matched MPCs from one TX--RX pair at one carrier frequency per outer fold; feature ranking, subset size, and regularization are selected on the remaining folds. The protocol measures prediction accuracy on unseen link-frequency combinations.
The matched-path observable is integrated energy of each dominant peak on a 1~GHz band-limited PDP.
Training-efficiency curves use 100 random train/test splits with the TX--RX pair as the sampling unit to quantify the number of measured links needed before RMSE saturates. Cross-environment transfer trains on all matched paths from one environment (e.g., InH) and predicts the other (e.g., InF) without adaptation, testing whether residual structure is environment-specific or universal. Plywood (InF) and wood (InH) are merged into one indicator because both materials have similar permittivity ($\varepsilon_r \approx 2$--$3$).

Five methods are compared: (i)~uncalibrated RT with ITU-R~P.2040-3 defaults~\cite{itu_p2040}; (ii)~NYURay-form material LS~\cite{kanhere2024nyuray}, fitting per-material reflection and penetration offsets via LOPO CV; (iii)~global mean offset; (iv)~gradient boosting~\cite{chen2016xgboost} (100 trees, depth~3, learning rate 0.05)~\cite{vasudevan2024mlsurvey}; and (v)~SPARC from Sec.~\ref{sec:method}.
Methods~(iv) and~(v) use the same power-gated matched MPC set, 19 features, and leave-one-group-out CV.

\section{Results and Discussion}
\label{sec:results}

\subsection{Main Calibration Comparison}

Table~\ref{tab:all_results} compares the calibration methods against field measurements. With power-gated matching, uncalibrated RT yields 18.74/23.12~dB RMSE in InF/InH; NYURay-form material LS~\cite{kanhere2024nyuray} yields 11.83/16.41~dB, consistent with Theorem~\ref{thm:material_bound}; the global offset yields 9.52/13.23~dB; and gradient boosting~\cite{chen2016xgboost} yields 5.92/5.86~dB. SPARC achieves 4.74~dB in InF and 5.39~dB in InH, outperforming gradient boosting by 1.18/0.47~dB with four linear weights versus 100 trees.
Fig.~\ref{fig:pdp_calibration} illustrates a held-out pair where uncalibrated RT underestimates multi-bounce MPCs by 15--19~dB, whereas calibrated powers closely track measurements.
At the link level, summing calibrated MPC powers per TX--RX pair reduces aggregate received-power RMSE from 9.66 to 3.41~dB across all InF groups.
Fig.~\ref{fig:method_comparison} shows the per-MPC RMSE ladder and per-pair cumulative distribution function (CDF).

\begin{figure}[!t]
    \centering
    \includegraphics[width=0.9\columnwidth]{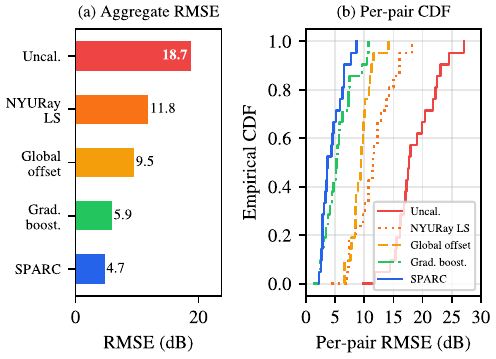}
    \vspace{-8pt}
    \caption{InF per-path calibration at 6.75 and 16.95~GHz with power-gated matching and a 1~GHz integrated-energy target. (a)~Aggregate RMSE; (b)~empirical CDF of per-pair RMSE.}
    \label{fig:method_comparison}
    \vspace{-10pt}
    \end{figure}

\begin{table}[t]
\centering
\caption{Per-path power RMSE (dB) for five methods on InF and InH at 6.75 and 16.95~GHz.}
\label{tab:all_results}
\footnotesize
\setlength{\tabcolsep}{5pt}
\begin{tabular}{@{}lcc@{}}
\toprule
Method & InF & InH \\
\midrule
Uncalibrated~\cite{itu_p2040} & 18.74 & 23.12 \\
NYURay-form material LS~\cite{kanhere2024nyuray} & 11.83 & 16.41 \\
Global mean offset & 9.52 & 13.23 \\
Gradient boosting~\cite{chen2016xgboost} (group CV) & 5.92 & 5.86 \\
SPARC (group CV) & \textbf{4.74} & \textbf{5.39} \\
\bottomrule
\end{tabular}
\par\vspace{3pt}{\footnotesize\raggedright\emph{Note:} All methods use the 1~GHz integrated-energy target and leave-one-group-out CV.\par}
\vspace{-8pt}
\end{table}

\begin{figure}[!t]
    \centering
    \includegraphics[width=0.92\columnwidth]{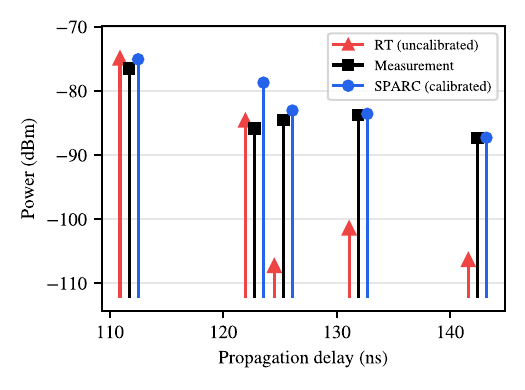}
    \vspace{-18pt}
    \caption{Matched dominant MPCs on a 1~GHz band-limited PDP for held-out InF pair TX2--RX1 at 6.75~GHz. Markers show measured, uncalibrated RT, and SPARC-calibrated powers; calibrated values are leave-one-group-out predictions. [T--R separation: 19.8~m, NLoS.]}
    \label{fig:pdp_calibration}
    \vspace{-2pt}
    \end{figure}

    \begin{figure}[!t]
    \centering
    \includegraphics[width=0.92\columnwidth]{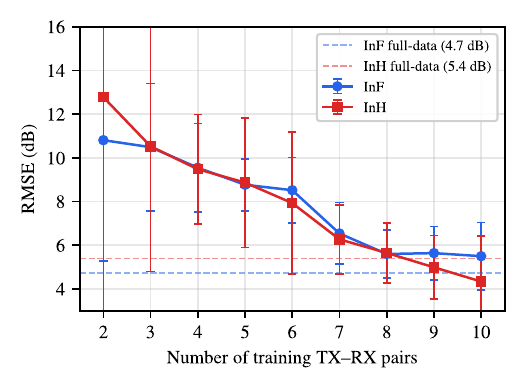}
    \vspace{-18pt}
    \caption{SPARC training efficiency on InF and InH at 6.75 and 16.95~GHz: mean per-path RMSE versus training TX--RX pairs over 100 random splits; error bars are $\pm$1 standard deviation.}
    \label{fig:training_efficiency}
    \vspace{-2pt}
    \end{figure}

\subsection{Feature Interpretation}

Nested CV retains 9--10 of 19 candidates per fold in InF and 6--8 in InH; four recur in both. The most influential feature is \emph{RT predicted power}: weaker paths carry larger errors because multi-bounce paths accumulate Fresnel uncertainty and geometry mismatch. A \emph{frequency indicator} captures the 6.75/16.95~GHz offset caused by frequency-dependent material and diffraction loss. \emph{TX--RX distance} and \emph{squared bounce count} encode propagation geometry and per-bounce accumulation. All four are standard RT outputs and require no object-level geometry or environment-specific labels, unlike~\cite{haniz2025multipath} which corrects path powers based on scatterer size.

Five linear estimators (ridge, Lasso, elastic net, Bayesian ridge, Huber) span only 0.13~dB RMSE, confirming that the choice of linear regularizer is not critical.
The all-feature ridge brings no practical benefit: under the same leave-one-group-out protocol with inner-fold regularization selection, it yields 4.81~dB per-path RMSE in InF versus 4.74~dB for the deployed sparse model (9--10 InF features); inner cross-validation never retains all 19, so sparsity aids interpretability and deployment.
The linear advantage over gradient boosting aligns with recent findings that ensemble methods degrade under extrapolation to unseen geometries~\cite{celades2026shadow}.

\subsection{Cross-Environment Generalization}

Table~\ref{tab:cross_env} summarizes generalization across the two measured indoor environments.

\begin{table}[H]
\centering
\caption{SPARC cross-environment validation at 6.75 and 16.95~GHz: per-path RMSE (dB).}
\label{tab:cross_env}
\footnotesize
\setlength{\tabcolsep}{4pt}
\begin{tabular}{@{}lc@{}}
\toprule
Protocol & RMSE (dB) \\
\midrule
Self-calibrated InF (LOPOCV) & 4.74 \\
Self-calibrated InH (LOPOCV) & 5.39 \\
Train InH, test InF & 8.80 \\
Train InF, test InH & 9.59 \\
Joint InF+InH (LOPOCV) & 5.73 \\
\midrule
Leave-one-TX-out InF (3 TX folds) & 4.99 \\
Leave-one-TX-out InH (4 TX folds) & 5.85 \\
\bottomrule
\end{tabular}
\par\vspace{1pt}{\footnotesize\raggedright\emph{Note:} Direct-transfer rows train on one environment and predict the other. Joint row pools both.\par}
\vspace{-3pt}
\end{table}

Per-environment SPARC achieves 4.74~dB (InF) and 5.39~dB (InH). Joint InF+InH training yields 5.73~dB overall (5.48/6.08~dB in InF/InH), within 1~dB of per-environment performance. Direct transfer yields 8.80/9.59~dB (InH$\to$InF/InF$\to$InH); joint training recovers most of the gap, indicating environment-specific residual structure. Leave-one-TX-out yields 4.99/5.85~dB (InF/InH). Fig.~\ref{fig:training_efficiency} shows that 7--8 training pairs achieve within 1~dB of full-data performance in both environments.

\section{Conclusion}
\label{sec:conclusion}

SPARC provides sparse per-path calibration for upper mid-band RT. The material-sensitivity bound shows that $\pm$30\% material perturbations shift each structural interaction by under 2~dB, whereas measured per-bounce bias exceeds 19~dB, identifying 3D scene inaccuracies as the dominant RT error source.
With a sparse feature subset from one completed RT run, SPARC reduces per-path RMSE from 18.74 to 4.74~dB (InF) and from 23.12 to 5.39~dB (InH), surpassing gradient boosting (100 trees) by 1.18 and 0.47~dB. Joint training yields 5.73~dB, and leave-one-TX-out validation yields 4.99/5.85~dB (InF/InH). All retained features are standard RT outputs, so no ray-tracer modification is required.

\bibliographystyle{IEEEtran_no_dash}
\bibliography{globecom_inf_calibration}

\end{document}